\newcommand\pair[2]{({#1},{#2})}
\newtheorem{definition}{Definition}[section]
\newtheorem{lemm}{Lemma}[section]
\newtheorem{theo}{Theorem}[section]
\def\<{\left <}
\def\>{\right >}
\def\({\left (}
\def\){\right )}
\DeclareSymbolFont{AMSb}{U}{msb}{m}{n}
\DeclareMathSymbol{\N}{\mathbin}{AMSb}{"4E}
\DeclareMathSymbol{\Z}{\mathbin}{AMSb}{"5A}
\DeclareMathSymbol{\R}{\mathbin}{AMSb}{"52}
\DeclareMathSymbol{\Q}{\mathbin}{AMSb}{"51}
\DeclareMathSymbol{\I}{\mathbin}{AMSb}{"49}
\DeclareMathSymbol{\C}{\mathbin}{AMSb}{"43}
\newcommand{\weg}[1]{}
\title{A geometric protocol for cryptography with cards}
\author{Andr\'es Cord\'on--Franco, Hans van Ditmarsch, \\ David Fern\'andez--Duque, and Fernando Soler--Toscano\thanks{Email: {\tt \{acordon,hvd,dfduque,fsoler\}@us.es}. Affiliation: University of Sevilla, Spain. Affiliation Hans v.D.: LORIA, France, and (as research associate) IMSc, India.}}
\date{\today}
\begin{document}
\maketitle
\begin{abstract}
In the {\em generalized Russian cards problem,} the three players Alice, Bob and Cath draw $a,b$ and $c$ cards, respectively, from a deck of $a+b+c$ cards. Players only know their own cards and what the deck of cards is. Alice and Bob are then required to communicate their hand of cards to each other by way of public messages. The communication is said to be {\em safe} if Cath does not learn the ownership of any specific card; in this paper we consider a strengthened notion of safety introduced by Swanson and Stinson which we call {\em $k$-safety.}

An elegant solution by Atkinson views the cards as points in a finite projective plane.  We propose a general solution in the spirit of Atkinson's, although based on finite vector spaces rather than projective planes, and call it the `geometric protocol'. Given arbitrary $c,k>0$, this protocol gives an informative and $k$-safe solution to the generalized Russian cards problem for infinitely many values of $(a,b,c)$ with $b=O(ac)$. This improves on the collection of parameters for which solutions are known. In particular, it is the first solution which guarantees $k$-safety when Cath has more than one card.

\end{abstract}

\section{Introduction}

The {\em generalized Russian cards problem} \cite{albertetal:2005,swanson:2012} is a family of combinatorial puzzles about secure secret-sharing between card players. It is parametrized by a triple of natural numbers $(a,b,c)$, which we call its {\em size}, and can be stated as follows:
\begin{quote}
{\bf The generalized Russian cards problem} 

\medskip

{Alice, Bob and Cath each draw $a,b$ and $c$ cards, respectively, from a deck containing a total of $a+b+c$. All players know which cards were in the deck and how many of them the other players drew, but may only see the cards in their own hand.

Alice and Bob want to know exactly which cards the other holds. Meanwhile, they do not want for Cath to learn who holds any card whatsoever, aside of course from her own cards.

However, they may only communicate by making true, clear, public announcements, so that Cath can learn all the information that they exchange.

Can Alice and Bob achieve this?}
\end{quote}
The solutions to this problem are given by {\em protocols.} The eavesdropper Cath is able to hear all communications, and Alice and Bob (and Cath) are aware of that. In such information-exchanging protocols between Alice and Bob, Cath typically acquires quite a bit of data, but not enough to be able to deduce any secrets. Many protocols have been proposed \cite{albertetal:2005,cordonetal:2012,swanson:2012}, and cryptography based on card deals is also investigated in other settings \cite{fischeretal:1996,mizukietal:2002,stiglic:2001}, yet it remains unknown exactly for which triples $(a,b,c)$ the problem can be solved.

The goal of this article is to present a new protocol based on finite linear algebra, inspired by a protocol based on projective geometry by Atkinson that was reported in \cite{albertetal:2005}, and that we present in Section \ref{ExampleSec}. This {\em geometric protocol} solves the generalized Russian cards problem in many instances that were previously unsolved. Also, it employs a stronger notion of security than is often considered, which we call {\em $k$-safety} and is equivalent to {\em weak $k$-security} in \cite{swanson:2012}.

Some general assumptions are needed to make the problem precise, which we shall formalize in Section \ref{sec.some}, along with the notion of {\em protocol}. First, the cards are dealt beforehand in a secure phase which we treat as a black box and gives the players no information about others' cards. The agents have no communication before this phase and cannot share secrets (such as private keys). Second, the agents have unlimited computational capacity. This means, on one hand, that solutions via encryption are not valid, provided they are vulnerable to cryptanalysis (regardless of its difficulty). It also means that we shall not be concerned with the feasibility of agents' strategies, although this is certainly an interesting line for future inquiry. Third, all strategies are public knowledge, keeping with Kerkhoffs' principle \cite{kerckhoffs:1883}.

\subsection{Known solutions}

Many solutions consist of two announcements. First, Alice announcess of a number of possible hands she may hold; then, Bob discloses Cath's cards (or other equivalent information). Atkinson's solution is for the case $(3,3,1)$ and consists of viewing the cards as points in a projective plane, in such a way that Alice holds a line. She then announces the seven lines of that plane, i.e., seven triples of cards. This protcol is presented in \cite{albertetal:2005}, along with many incidental results, and will be discussed further in Section \ref{ExampleSec}. The size $(3,3,1)$ was first considered by Kirkman \cite{kirkman:1847}, who suggests a solution using a {\em design}, a collection of subsets of a given set that satisfies certain regulaties \cite{stinson:2004}. The design consists of seven triples --- and accidentally these are the lines that form the projective geometric plane.

A possibly better-known solution for $(3,3,1)$ is to number the cards $0,\hdots,6$, after which Alice and Bob announce the sum of their cards modulo 7 (i.e., modulo the number of cards in the pack). This was the proposed solution when the problem appeared in the Moscow Mathematics Olympiad in 2000, from which it earned its current name \cite{makarychevs:2001}; a very similar approach works in any case where Cath holds only one card and $a,b>2$ \cite{cordonetal:2012}. A protocol of three announcements for $(4,4,2)$ is reported in \cite{threesteps}, and a four-step protocol for $c=O(a^2)$ and $b=O(c^2)$ is presented in \cite{colouring}. The last two are notable because they provide solutions for $c > 1$ and neither can be solved with two-announcement protocols.

Stronger notions of security are studied by Swanson and Stinson in \cite{swanson:2012}. There, a distinction is made between {\em weak} and {\em perfect} security; in perfectly secure protocols, Cath does not acquire any probabilistic information about the ownership of any specific card. They also introduce {\em $k$-secure protocols,} where tuples of at most $k$ cards are considered simultenously. They then characterize the perfect $(k-1)$-secure solutions for sizes $(a,b,a-k)$ and show that $k=a-1$ and $c=1$. These notions will be revisited in Section \ref{sec.some}, but we remark that the protocol we shall present is weakly $k$-secure. 

\subsection{Plan of the paper} In Section \ref{ExampleSec} we give an informal description of the protocol. Section \ref{sec.some} formalizes our model of security and our notion of protocol, so that in Section \ref{SecGeoProt} we may give a rigorous specification. The protocol depends on several parameters and is only executable under certain constraints; Section \ref{SecCompPat} computes solutions to these constraints.

\subsection{Finite geometry} We assume some basic familiarity with finite fields and finite geometry; these are covered in texts such as \cite{lidl1997} and \cite{dembowski1997}, respectively.

Throughout the paper, $p$ will denote a prime or a power of a prime, and $\mathbb F_p$ the field with $p$ elements. If $d$ is any natural number, $\mathbb F^d_p$ denotes the vector space of dimension $d$ over $\mathbb F_p$. Given sets $U,V\subseteq\mathbb F^d_p$ we write $\langle U\rangle$ for the span of $U$ (i.e., the set of all linear combinations of elements of $U$), and we write $U+ V$ for the set $\{u+v:u\in U\text{ and }v\in V\}$. We may write $x+ V$ instead of $\{x\}+ V$, and similarly $\langle x,U\rangle$ instead of $\langle \{x\}\cup U\rangle$. By a {\em hyperplane} we mean any set of the form $x+V$ where $V$ is a subspace of dimension $d-1$, and two hyperplanes $X,Y$ are {\em parallel} if $X\not=Y$ but there is a vector $x$ such that $X=x+Y$.
 
Recall that $|\mathbb F^d_p|=p^d$, where $|X|$ denotes the cardinality of $X$. Moreover, if $U\not =V$ are hyperplanes, then $U$ has exactly $p^{d-1}$ elements, while $|U\cap V|\leq p^{d-2}$. Parallel hyperplanes have empty intersection.

\section{Motivating examples}\label{ExampleSec}

Let us begin by presenting Atkinson's solution for the case $(3,3,1)$. In this setting, Alice and Bob each draw three cards from a deck of seven cards, while Cath gets the remaining card. The claim is that Alice and Bob can communicate their cards to each other by way of public announcements, without informing Cath of any of their cards. First, Alice announces that her hand is a line in a projective plane consisting of seven points (cards). Or, to be precise, Alice assigns a point in the projective plane to each card in such a way that her own hand forms a line, and then announces ``The hand I hold is one of the following\textellipsis,'' after which she proceeds to list every set of cards which corresponds to a line. Then, to conclude the protocol, Bob announces Cath's card.

\begin{figure}[h]
\begin{center}
\begin{tikzpicture}
\node[label=below:0] (0) at (0,0) {$\bullet$};
\node[label=below:1] (1) at (2,0) {$\bullet$};
\node[label=below:2] (2) at (4,0) {$\bullet$};
\node[label=right:4] (3) at (3,1.73) {$\bullet$};
\node[label=above:5] (4) at (2,3.46) {$\bullet$};
\node[label=left:6] (5) at (1,1.73) {$\bullet$};
\node (6) at (2,1.15) {$\bullet$};
\node (6node) at (2.25,1.6) {3};
\draw (0,0) -- (4,0) -- (2,3.46) -- (0,0);
\draw (0,0) -- (3,1.73);
\draw (1,1.73) -- (4,0);
\draw (2,0) -- (2,3.46);
\draw (2,1.15) circle (1.15);
\end{tikzpicture}
\end{center}
\caption{Alice holds a line in the 7-point projective plane}
\label{projplane}
\end{figure}
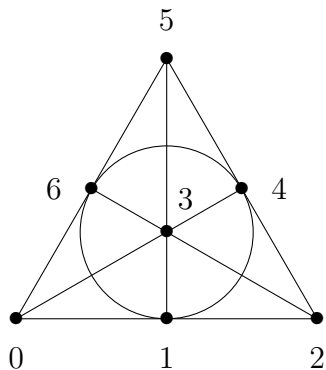

Why does this work? Suppose that the cards are numbered $0, 1, \hdots, 6$, that Alice holds the cards $0$, $1$ and $2$, Bob holds $3$, $4$, and $5$, and therefore Cath holds $6$. Alice announces: ``My cards form a line in the projective plane whose lines are $012$, $034$, $056$, $135$, $146$, $236$, and $245$.'' (See Figure \ref{projplane}.) Bob then announces: ``Cath holds $6$.'' After Alice's announcement, Cath, who holds card $6$, can eliminate from the seven triples the ones containing $6$: $056$, $146$, and $236$. The remaining hands are: $012$, $034$, $135$ and $245$. Cath therefore cannot deduce that Alice has $0$, because $135$ is a possible hand of Alice. She also cannot deduce that Alice does not have $0$, because Alice's actual hand $012$ is also a possible hand. And so on, for all possible cards of Alice. Also---and this is important---for any other deal of cards in which Alice can truthfully make this announcement we can repeat this exercise, e.g., also when Alice holds $012$ and Cath $4$, also when Alice holds $135$ and Cath $0$, and so on. %(Of course this can be systematically formulated in logic \cite{hvd.studlog:2003} or combinatorics \cite{albertetal:2005}.)

Meanwhile, Bob learns Alice's cards from her announcement, because all but $012$ contain either a $3$, a $4$, or a $5$. Again, we have to do this for all card triples $xyz$, not just for Bob's actual hand $345$, and also for all seven possible hands of Alice and all hands Bob can have in that case. After Alice's announcement Bob therefore always knows Cath's card and can announce it, from which Alice also learns the entire deal of cards. 

Let us give an informal account of the geometric protocol to motivate the formal description later. It is similar to Atkinson's, but there are three main differences: 
\begin{enumerate}
\item Projective spaces are replaced by vector spaces; in fact, the original protocol is not dependent on any particular property of projective spaces that they do not share with vector spaces.
\item Rather than considering exclusively planes, we work over spaces of arbitrary dimension, so that in general Alice arranges her cards on a hyperplane.
\item Alice may have more cards than fit on a single hyperplane. Thus she shall arrange her cards on several parallel hyperplanes.
\end{enumerate} 

The protocol works as follows. Fix a size $(a,b,c)$ such that there
are integers $d,k>0$ and a prime power $p$ with $a=kp^{d}$ and
$a+b+c=p^{d+1}$. Suppose that $(A,B,C)$ has been dealt, and that $D = A \cup B \cup C$. Alice chooses
a map $f:D\to\mathbb F^{d+1}_p$, such that $A$ is the disjoint union of
$k$ parallel hyperplanes. Then, she announces the set $\mathcal A$ of
all hands $X$ with $a$ elements such that $f(X)$ has this form; in other words:
such that $X$ maps to the disjoint union of $k$ parallel hyperplanes. 

In Section \ref{SecGeoProt}, in Definition \ref{defprot}, we shall give a more rigorous definition of this protocol, and we then also show this protocol to be  $k$-safe, provided the parameters satisfy certain constraints; $k$-safe means that for any $k$ cards not held by Cath, she cannot learn whether Alice all holds them (i.e., if she holds all those cards or does not hold some of them). In that sense, the above projective plane solution for $(3,3,1)$ is $1$-safe.

But first, let us focus on a specific instance to see the protocol in
action. Consider a card deal of size $(8,6,2)$, so that there are 16 cards, of which Alice holds eight, Bob six and Cath two. For this example we will work over $\mathbb F^2_4$; note that the field $\mathbb F_4$ has elements $\{0,1,\alpha,\alpha^2\}$ where $\alpha$ is a root of $x^2+x+1$. In order to execute the geometric protocol, Alice first announces that her cards are two parallel lines in $\mathbb F^2_4$; after this, Bob informs Alice of Cath's cards. 

Let us see why this protocol works. The deck $D$ may be any set with 16 elements. The only thing that matters is a bijection $f:D\to\mathbb F^2_4$, where we represent $\mathbb F^2_4$ as $\{\pair ij \mid i,j\in \{0,1,\alpha,\alpha^2\} \}$.

  \begin{figure}[htbp!]
  \centering
  \includegraphics[width=3cm]{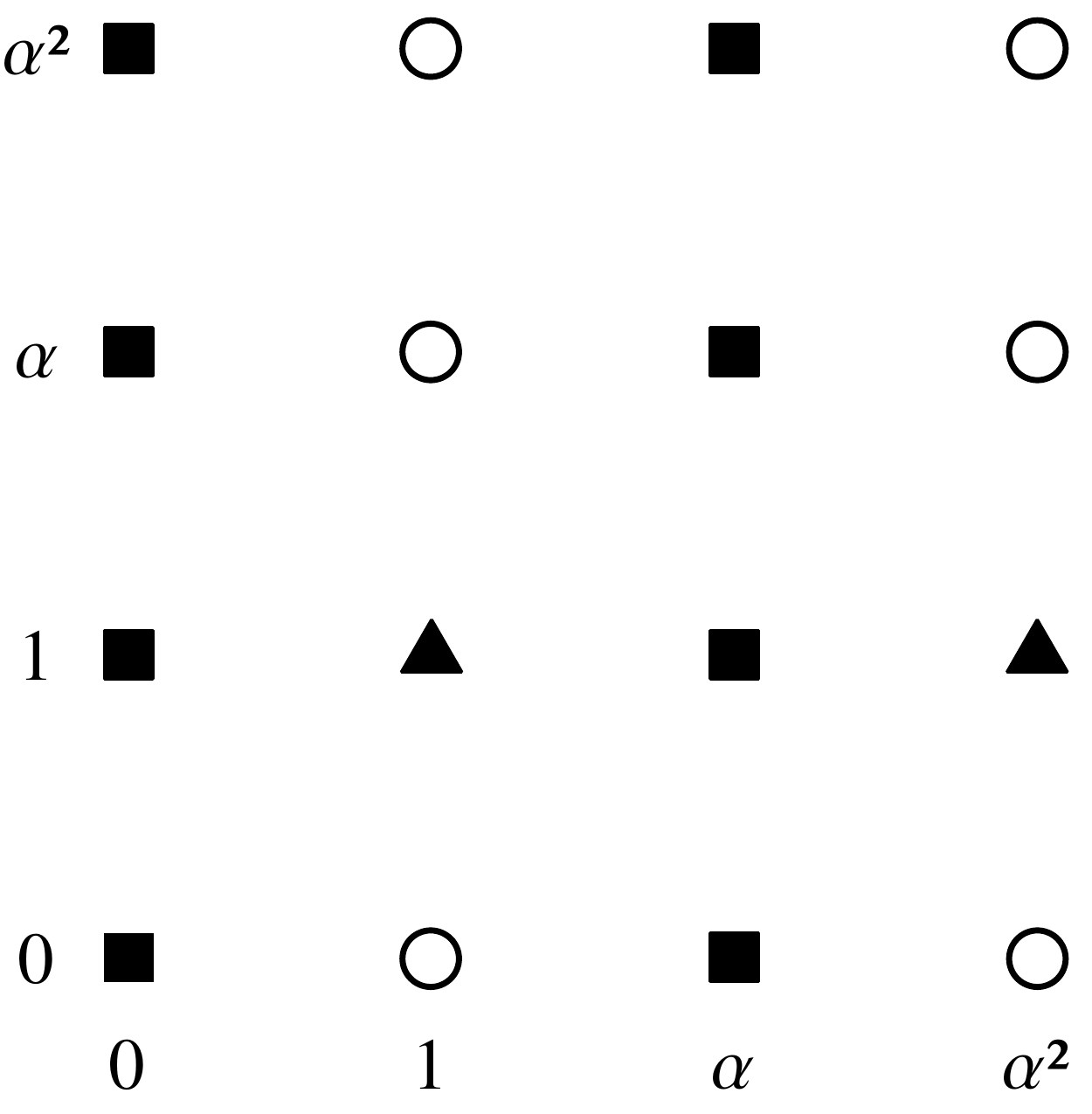}\hfill
  \includegraphics[width=3cm]{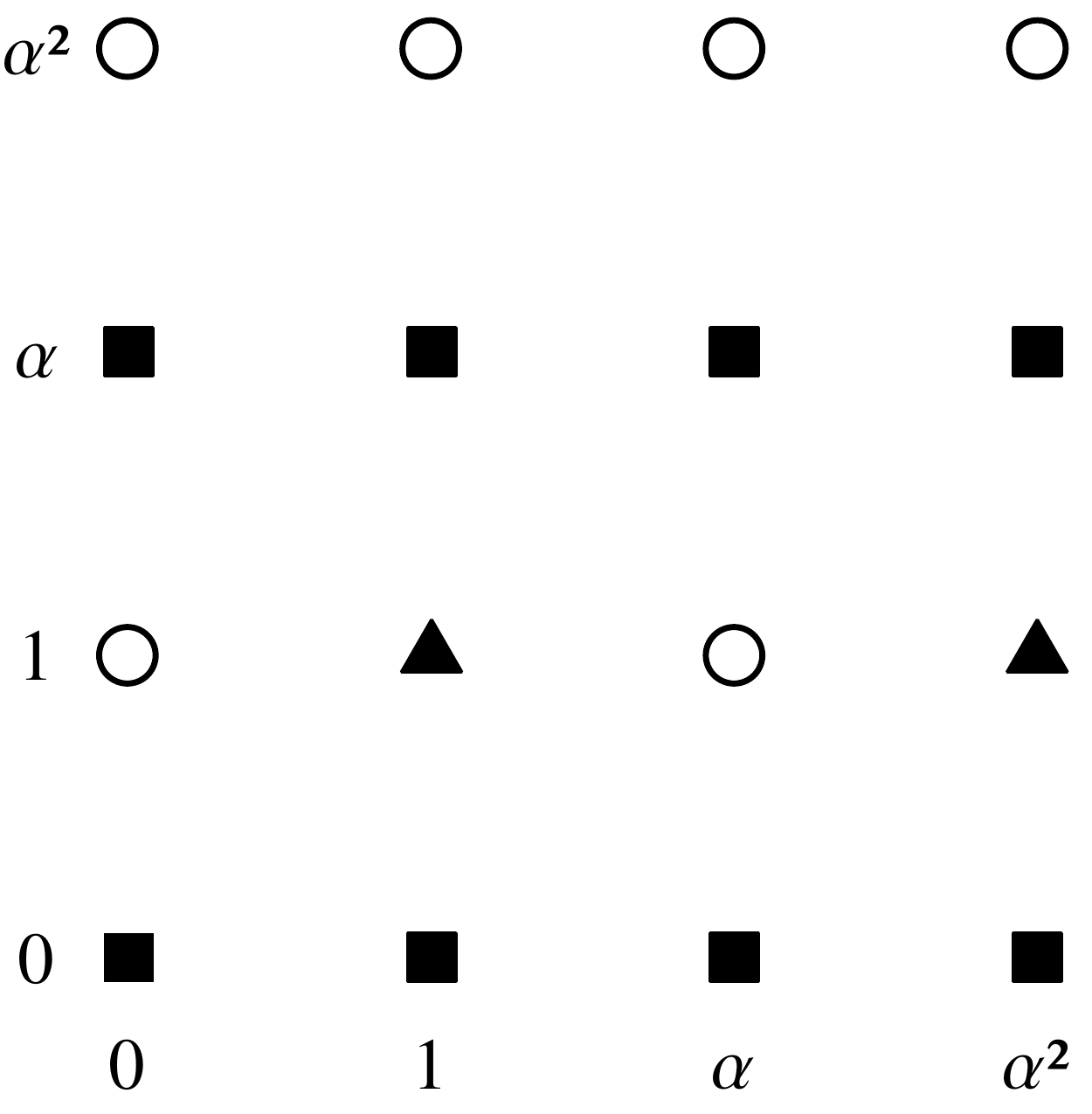}\hfill
  \includegraphics[width=3cm]{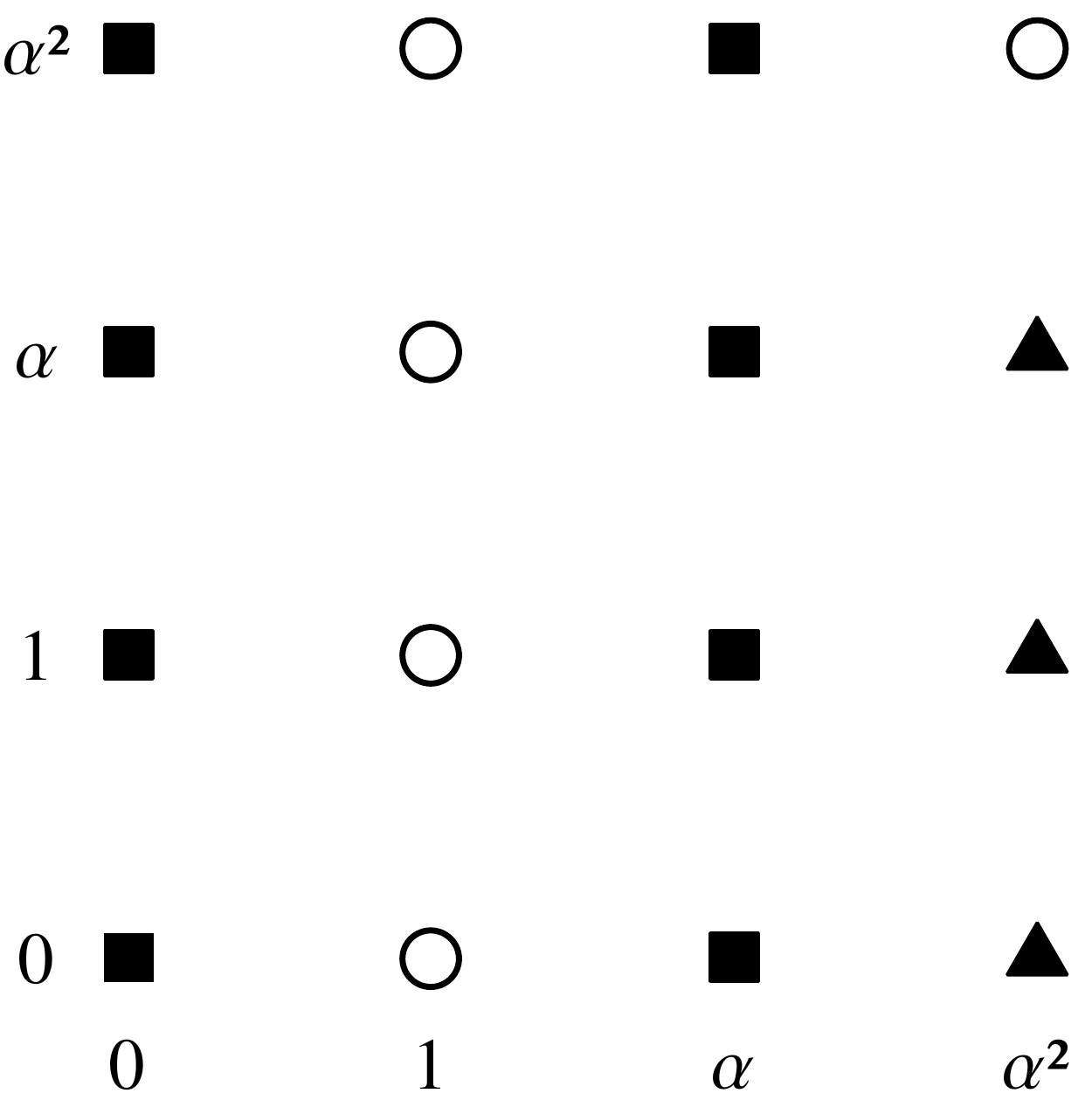}
  \caption{Card deals in $\mathbb F^2_4$}
  \label{twentyfive}
\end{figure}

Why does this bijection inform Bob of Alice's cards? Consider the
configuration in Figure \ref{twentyfive} (left), where
Alice's cards are represented as $\blacksquare$, Bob's as
$\ocircle$ and Cath's as $\blacktriangle$. This two-dimensional plane $\mathbb F^2_4$ consists of 20 lines (hyperplanes), such as the four horizontal and the four vertical lines in the figure. The other three foursomes are somewhat less obvious to visualize based on this representation; for example, the line $x=y$ is given by the set $L=\{\pair 00,\pair 11,\pair \alpha\alpha,\pair{\alpha^2}{\alpha^2}\}$, whereas the parallel line $\pair 0{\alpha^2}+L$ is given by $\{\pair 0{\alpha^2},\pair 1{\alpha},\pair \alpha 1,\pair {\alpha^2}0\}$. Just as for $(3,3,1)$ and the seven-point projective plane, Alice's announcement rules out some possibilities for her hand, as not every set of eight points includes two parallel lines. 

Alice's announcement is sufficient for Bob to
determine Alice's hand. Suppose that $A$, $B$, $C$ are the sets of cards that
Alice, Bob and Cath hold, respectively. Then, $A\cup C$ contains ten
points, and thus cannot contain two different pairs of parallel
lines, as the minimum set to contain different pairs of parellel lines is twelve points. In Figure \ref{twentyfive} (left), $A\cup C$ contains {\em three} lines, namely the two verticals and the one horizontal, but only {\em one} pair of parallel lines. More generally, one can check no set of ten points contains two distinct pairs of parallel lines.

On the other hand, it is not merely the case that Cath, who holds points (cards) $\pair 11$ and $\pair{\alpha^2}1$, cannot determine the ownership of a single card not in her possession, but it is even the case that for every pair $\{x,y\}$ of points, Cath considers it possible that $\{x,y\}\subseteq A$ and that $\{x,y\}\not\subseteq A$. In other words, the protocol is $2$-safe; for example, consider the pair $\{\pair{\alpha^2}0,\pair{\alpha^2}\alpha\}$. In Figure \ref{twentyfive} (left) these cards are held by Bob, but in Figure \ref{twentyfive} (center) these cards are held by Alice. From Cath's perspective, the card deal $(A',B',C)$ depicted in the center, where $\{\pair{\alpha^2}0,\pair{\alpha^2}\alpha\}\subseteq A'$, is indistinguishable from the card deal $(A,B,C)$ on the left, where $\{\pair{\alpha^2}0,\pair{\alpha^2}\alpha\}\not\subseteq A$, even though $\{\pair{\alpha^2}0,\pair{\alpha^2}\alpha\}\cap A = \varnothing$. This can be verified systematically for all pairs and indeed Cath cannot learn any of Alice's cards, or even predict that Alice holds a card from any given pair. 

However, if Cath has three cards then she may learn some of Alice's cards. Consider 
Figure \ref{twentyfive} (right). Cath learns that the point
$\pair{\alpha^2}{\alpha^2}$ belongs to Bob, as there is no pair of parallel lines
in $A\cup B$ such that one of them contains $\pair{\alpha^2}{\alpha^2}$, for any such pair of lines would cross the line $x=\alpha^2$ in two different points, and Cath would hold one of them. If she has even more than three cards this only gets worse; indeed, it is known that no two-step protocol can be safe and informative when Cath holds as many as or more cards than Alice\footnote{Note however that this restriction may be circumvented by using protocols of more than two steps \cite{colouring}.} \cite{albertetal:2005}. 

\section{Protocols and safety} \label{sec.some} 

Before we continue let us present the notions of {\em protocol,} {\em informativity} and {\em safety} we shall use. Throughout this paper, we will assume that $D$ is a fixed, finite set of ``cards''. A {\em card deal} is a partition $(A,B,C)$ of $D$; the deal has {\em size} $(a,b,c)$ if $A$ is an $a$-set, $B$ a $b$-set and $C$ a $c$-set, where by ``$x$-set'' we mean a set of cardinality $x$. We denote the set of $x$-subsets of $Y$ by $Y\choose x$. We think of $A$ as the {\em hand} of Alice, or that Alice {\em holds $A$}; similarly, $B$ and $C$ are the hands of Bob and Cath, respectively. In general we may simply assume that $D=\{1,\hdots,a+b+c\}$, and define ${\rm Deal}(a,b,c)$ to be the set of partitions of $D$ of size $(a,b,c)$. In \cite{swanson:2012} and other papers, an {\em announcement} has been modelled as a set of hands that one of the agents may hold. Thus Alice would announce a subset $\mathcal A$ of $D\choose a$, indicating that $A\in\mathcal A$, and we follow this presentation.

A characteristic assumption of the problem is that there is a secure dealing phase in which the players learn no information about others' cards, encrypted or otherwise. At the beginning of the protocol, players have knowledge of their own cards, the cards contained in $D$, and of the size $(a,b,c)$ of the deal, but nothing more. Thus they are not able to distinguish between different deals where they hold the same hand. We model this by equivalence relations between deals; since from Alice's perspective, $(A,B,C)$ is indistinguishable from $(A,B',C')$, we define $(A,B,C)\stackrel{Alice}\sim(A',B',C')$ if and only if $A=A'$. We may define analogous equivalence relations for Bob and Cath.

{\em Strategies} of length two are defined in \cite{swanson:2012}. These assign a probability distribution to Alice's possible announcements. If the probability distribution is uniform these are called {\em equitable} strategies. We will work exclusively with equitable strategies, and simply call them {\em protocols.} Since probability is distributed evenly among the possible outcomes, we may dispense with probability measures and merely specify a set of possible announcements.

\begin{definition}[Protocol]\label{defprot}
Fix a size $(a,b,c)$ and let $D=\{1,\hdots,a+b+c\}$. A {\em protocol} (for $(a,b,c)$) is a function $\pi$ assigning to each $A\in {D\choose a}$ a non-empty set $\pi(A)\subseteq \mathcal P\left({D\choose a}\right)$ with the property that $A\in\bigcap \pi(A)$.
\end{definition}

Protocols are non-deterministic in principle; Alice may announce any $\mathcal A\in \pi(A)$. Meanwhile, a successful protocol must have two additional properties. The first is that Alice and Bob know each other's cards (and hence the entire deal) after its execution. Note that it is sufficient for Bob to learn the deal since, once Bob knows Alice's hand, he also knows Cath's and may proceed to announce $C$. Because of this, the protocols we present are in principle two-step protocols, even though we only focus on Alice's announcement and leave Bob's second announcement implicit.

\begin{definition}[Informativity] \label{def.informativity}
Given a deal $(A,B,C)$, an announcement $\mathcal A$ is {\em informative} for $(A,B,C)$ if there is only one $X\in\mathcal A$ such that $X\subseteq A\cup C.$

A protocol for $(a,b,c)$ is informative if for every $(A,B,C)\in{\rm Deal}(a,b,c)$, every $\mathcal A\in \pi(A)$ is informative for $(A,B,C)$.\footnote{Our presentation follows that given in \cite{swanson:2012}. Compare to \cite{hvd.studlog:2003,albertetal:2005}, where {\em informative for $(A,B,C)$} is defined as follows: {\em Given a deal $(A,B,C)$, an announcement $\mathcal A$ containing $A$ is {\em informative} for $(A,B,C)$ if for any $A' \in \mathcal A$ and any $(A',B',C')$, there is only one $X\in\mathcal A$ such that $X\subseteq A'\cup C'$.} In principle this is stronger than the definition we give; however, this is remedied by the more general condition of being {\em informative for $(a,b,c)$.} 
}
\end{definition}

The second property we desire from a protocol is that Cath does not gain ``too much information''. How much is too much depends on a parameter we shall usually call $k$ and states that, given $X\in {D\choose k}$, it is possible from Cath's perspective that $X\subseteq A$ and also possible that $X\not\subseteq A$. This is the notion of {\em weak $k$-security} from \cite{swanson:2012}; we simply call it {\em $k$-safety.}

\begin{definition}[$k$-Safety]\label{w1s}
Given a protocol $\pi$ for $(a,b,c)$ and $A\in {D\choose a}$, an announcement $\mathcal A\in \pi(A)$ is {\em $k$-safe} if for every deal $(A,B,C)$ and every non-empty set $X$ with at most $k$ elements such that $X \cap C=\varnothing$ there is
a deal $(A',B',C)$ such that $X\subseteq A'$ and $\mathcal A\in \pi(A')$, as well as a deal $(A'',B'',C)$ such that $X\not\subseteq A''\in\mathcal A$.

The protocol $\pi$ is $k$-safe if every $\mathcal A\in \pi(A)$ is $k$-safe.
\end{definition}

A stronger notion of security is also discussed in \cite{swanson:2012}. Let us use ${\sf Pr}(\cdot|\cdot)$ to denote conditional probability. {\em Weak $k$-security} is equivalent to the statement that, given a non-empty set $X$ with at most $k$ cards such that $X\cap C=\varnothing$,
\[0<{\sf Pr}(X\subseteq A|\mathcal A,C)<1.\]
This probability may, however, be very small or very large. A stronger notion of security would demand that Cath does not gain probabilistic information from the protocol, so that
\[{\sf Pr}(X\subseteq A|\mathcal A,C)={\sf Pr}(X\subseteq A|C).\]
This is called {\em perfect $k$-security} and is similar to the combinatorial axiom CA4 in \cite{atkinsonetal:2009}. The protocols we shall present are not perfectly secure, but they are weakly $k$-secure for some fixed value of $k$.

\section{The geometric protocol}\label{SecGeoProt}

Here we shall give a formal definition of our protocol in the sense of Definition \ref{defprot} and prove that it indeed provides a $k$-safe solution to the generalized Russian cards problem.

The protocol is based on {\em slicings}:

\begin{definition}[Slicing]
Let $p$ be a prime power and $k,d$ positive integers. Say a set $X\subseteq \mathbb F^{d+1}_p$ is a {\em $k$-slicing} if there are a $d$-dimensional subspace $V$ and $x_1,\hdots,x_k\in\mathbb F^{d+1}_p$ such that $X=\bigcup_{i=1}^k (x_i+ V)$ and $x_i-x_j\in V$ if and only if $i=j$.
\end{definition}

In other words, a $k$-slicing is a union of $k$ parallel hyperplanes. Note that $k$-slicings have exactly $kp^d$ elements.

\begin{definition}[The geometric protocol]\label{DefGeoProt}
Fix a size $(a,b,c)$ such that there are integers $d,k$ and a prime power $p$ with $a=kp^{d}$ and $a+b+c=p^{d+1}$. For a bijection $f:D\to\mathbb F^{d+1}_p$, define $\mathcal A_k[f]$ to be the set of all $X\subseteq D$ such that $f(X)$ is a $k$-slicing.

We then define the {\em geometric protocol} $\gamma$ for $(a,b,c)$ (with parameters $p,d,k$) to be given by $\mathcal A\in\gamma(A)$ if and only if $A\in\mathcal A$ and $\mathcal A=\mathcal A_k[f]$ for some bijection $f:D\to\mathbb F^{d+1}_p$.
\end{definition}

Our main objective is to show that the geometric protocol provides a $k$-safe and informative solution to the generalized Russian cards problem, but this requires for the parameters to satisfy certain conditions. Although we defer the proof, let us state our main theorem now:

\begin{theo}\label{TheoMain}
Assume that $a,b,c,p,d,k$ are such that $p$ is a prime power, $a=kp^{d}$, $a+b+c=p^{d+1}$ and
\begin{align}
c&<kp^{d}-k^2p^{d-1},\label{Cond1}\\
\max\{c+k,ck\}&\leq  p.\label{Cond2}
\end{align}

Then, the geometric protocol with parameters $p,d,k$ is $k$-safe and informative for $(a,b,c)$.
\end{theo}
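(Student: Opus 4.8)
The plan is to fix the bijection $f$ and work entirely inside $\mathbb F^{d+1}_p$, identifying each card with its image, so that $A$, $C$ and $X$ become sets of points, the announcement $\mathcal A=\mathcal A_k[f]$ becomes the set of all $k$-slicings, and $\mathcal A\in\gamma(A')$ holds precisely when $A'$ is a $k$-slicing. I would lean on three elementary facts recorded in the introduction: each direction (that is, each $d$-dimensional subspace $V$) partitions the space into exactly $p$ parallel hyperplanes of $p^d$ points each; two hyperplanes with distinct directions meet in at most $p^{d-1}$ points; and there are $\frac{p^{d+1}-1}{p-1}$ directions, of which exactly $\frac{p^d-1}{p-1}$ contain any fixed nonzero vector.

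For informativity I would show that $A$ is the \emph{only} $k$-slicing contained in $A\cup C$. So suppose $S$ is another one. If $S$ has the same direction as $A$, then $S$ and $A$ are unions of parallel hyperplanes and, being distinct, satisfy $|A\cup S|\geq (k+1)p^d$, whence $|S\setminus A|\geq p^d$; since $S\setminus A\subseteq C$ this forces $c\geq p^d$, contradicting $c\le p-1<p^d$, which follows from \eqref{Cond2}. If instead $S$ has a direction different from that of $A$, then $S\cap A$ is covered by the $k^2$ intersections of one hyperplane of $S$ with one of $A$, each of size at most $p^{d-1}$, so $|S\cap A|\leq k^2p^{d-1}$ and hence $|S\setminus A|\geq kp^d-k^2p^{d-1}$. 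As $S\setminus A\subseteq C$ this yields $c\geq kp^d-k^2p^{d-1}$, contradicting \eqref{Cond1}. This second case is the crux of the whole argument and is exactly where Condition \eqref{Cond1} is consumed.

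For $k$-safety fix a deal $(A,B,C)$ and a set $X$ with $1\le|X|\le k$ and $X\cap C=\varnothing$; it suffices to produce two $k$-slicings disjoint from $C$, one containing $X$ and one not. For the first, I would call a direction $V$ \emph{good} if no hyperplane in direction $V$ contains both a point of $X$ and a point of $C$, i.e. $x-z\notin V$ for all $x\in X$ and $z\in C$. Each pair $(x,z)$ forbids the $\frac{p^d-1}{p-1}$ directions containing $x-z$, so the number of bad directions is at most $ck\cdot\frac{p^d-1}{p-1}<\frac{p^{d+1}-1}{p-1}$, the strict inequality using $ck\le p$ from \eqref{Cond2}; hence a good direction exists. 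In a good direction the at most $k$ hyperplanes meeting $X$ are $C$-free, and since $C$ blocks at most $c$ of the $p$ hyperplanes there remain at least $p-c\ge k$ that are $C$-free (again by \eqref{Cond2}); choosing $k$ of these that include the ones meeting $X$ gives the desired slicing.

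For the second slicing, pick $x_0\in X$. If $c\ge 1$, choose $z\in C$ and a direction $V$ with $x_0-z\in V$, so that the hyperplane through $x_0$ also contains $z$ and is therefore not $C$-free; any $k$ of the $\ge p-c\ge k$ $C$-free hyperplanes then avoid $x_0$, giving a slicing disjoint from $C$ with $X\not\subseteq S$. If $c=0$ one simply omits the hyperplane through $x_0$ and picks $k$ of the remaining $p-1$ hyperplanes, which is possible because \eqref{Cond1} forces $k<p$. The only delicate point here is the borderline case $p=c+k$, where a good direction leaves exactly $k$ $C$-free hyperplanes and so cannot by itself exclude any point of $X$; the alignment trick of placing $x_0$ in a blocked hyperplane is what resolves it. I expect the different-direction informativity bound to be the main obstacle, with this borderline safety case the secondary one.
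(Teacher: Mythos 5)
Your proposal is correct, and its overall architecture matches the paper's: informativity via a lower bound on the union (equivalently, the set difference) of two distinct $k$-slicings, consuming \eqref{Cond1} exactly where you say it is consumed, and $k$-safety via exhibiting one $k$-slicing through $X$ avoiding $C$ and one avoiding $C\cup\{x_0\}$, with the second obtained by the same ``alignment trick'' of choosing a direction in which $x_0$ and some $z\in C$ share a hyperplane. The one place you take a genuinely different route is the key existence statement for safety: the paper's Lemma \ref{LemmSafePre} produces the ``good direction'' by induction on subspace dimension, extending an $e$-dimensional subspace avoiding $C$-collisions to an $(e+1)$-dimensional one at each step via a count of the subspaces lying over it; you instead do a one-shot double count over all $\frac{p^{d+1}-1}{p-1}$ hyperplane directions, observing that each of the at most $ck$ pairs $(x,z)$ rules out only the $\frac{p^d-1}{p-1}$ directions containing $x-z$, and that $ck\leq p$ makes the union bound strict. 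Both arguments use \eqref{Cond2} in essentially the same strength; yours is shorter and avoids the induction, at the cost of invoking the count of hyperplanes through a fixed nonzero vector, which (contrary to your attribution) the paper does not record in its preliminaries but which is standard. Your case split for informativity (same direction versus different direction, rather than the paper's shared hyperplane versus no shared hyperplane) is cosmetically different but yields the same two bounds, and your explicit treatment of $c=0$ covers a degenerate case the paper's proof of Lemma \ref{LemmSafe} silently skips.
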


Before we give a proof, we need to give some preliminary results. These will also help elucidate the purpose of \eqref{Cond1} and \eqref{Cond2}. Note that $\max\{c+k,ck\}$ is usually equal to $ck$ except when either $k$ or $c$ is equal to one. We remark that the bounds given by Theorem \ref{TheoMain} are sufficient but not necessary; for example, the assiduous reader will verify that the protocol is $2$-safe for $(10,12,3)$, yet the bounds we give are not satisfied.

Let us begin with a combinatorial lemma about slicings.

\begin{lemm}\label{LemmDecomp}
Let $p$ be a prime power and $k\in[1,p-1]$. If $X,Y\subseteq \mathbb F^{d+1}_p$ are two distinct $k$-slicings then
\[|X\cup Y|\geq \min\{(k+1)p^{d},2kp^{d}-k^2p^{d-1}\}.\]
\end{lemm}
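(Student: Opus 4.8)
The plan is to use inclusion–exclusion to convert the desired lower bound on $|X\cup Y|$ into an upper bound on $|X\cap Y|$. Since every $k$-slicing has exactly $kp^d$ elements, we have $|X\cup Y| = 2kp^d - |X\cap Y|$, so the claimed inequality is equivalent to $|X\cap Y|\leq \max\{(k-1)p^d,\,k^2p^{d-1}\}$. Writing $X=\bigcup_{i=1}^k (x_i+V)$ and $Y=\bigcup_{j=1}^k (y_j+W)$ for the two slicings, where $V,W$ are $d$-dimensional subspaces and the $x_i$ (respectively the $y_j$) lie in pairwise distinct cosets of $V$ (respectively $W$), I would split the argument according to whether the two slicings share a common direction, that is, whether $V=W$.

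If $V=W$, then $X$ and $Y$ are both unions of $k$ cosets of the single subspace $V$. The cosets of $V$ partition $\mathbb F^{d+1}_p$ into blocks of size $p^d$, and since $X\neq Y$ the two $k$-element families of chosen cosets must differ, so they have at most $k-1$ cosets in common. Hence $|X\cap Y|\leq (k-1)p^d$, giving $|X\cup Y|\geq (k+1)p^d$.

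If $V\neq W$, then both are $d$-dimensional subspaces of the $(d+1)$-dimensional space $\mathbb F^{d+1}_p$, so $\dim(V+W)=d+1$ and therefore $\dim(V\cap W)=d-1$, whence $|V\cap W|=p^{d-1}$. I would then write $X\cap Y$ as the union of the $k^2$ sets $(x_i+V)\cap(y_j+W)$. Each such set is either empty or a single coset of $V\cap W$, hence has at most $p^{d-1}$ elements; moreover these $k^2$ intersections are pairwise disjoint, since the $x_i+V$ are mutually disjoint and so are the $y_j+W$. Summing over all pairs gives $|X\cap Y|\leq k^2p^{d-1}$, so $|X\cup Y|\geq 2kp^d-k^2p^{d-1}$. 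As in each of the two cases $|X\cup Y|$ is bounded below by one of the two quantities, it is bounded below by their minimum, as required.

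Most of this is routine; the step demanding the most care is the transverse case, where I must justify that two affine hyperplanes pointing in distinct directions meet in a coset of $V\cap W$ of size exactly $p^{d-1}$, and that the $k^2$ resulting pieces are genuinely disjoint so that their cardinalities simply add. The hypothesis $k\leq p-1$ is what guarantees the slicings exist in the first place, since $V$ has exactly $p$ cosets and a $k$-slicing selects $k$ of them.
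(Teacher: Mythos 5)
Your proof is correct and follows essentially the same route as the paper's: a two-case inclusion--exclusion argument, bounding the pairwise intersections of distinct hyperplanes by $p^{d-1}$ in the transverse case and exhibiting $k+1$ mutually disjoint parallel hyperplanes in the other. The only (harmless) difference is that you split on whether the slicings share a direction ($V=W$), whereas the paper splits on whether they share an actual hyperplane; the subcase $V=W$ with no common coset lands in your first case and the paper's second, but both yield the required bound.
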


\begin{proof}
Write $X=\bigcup_{i=1}^kU_i$ and $Y=\bigcup_{i=1}^k W_i$ as disjoint unions of parallel hyperplanes. It may be that for certain values of $i,j$ we have that $U_i=W_j$. If this is the case, since $X\not =Y$ there must be some $U=U_{i_\ast}$ such that $U\not=W_j$ for any $j$. Then, since $U$ is parallel to all $W_i$, we have that $U,W_1,\hdots,W_k$ are mutually disjoint and hence
\[(k+1)p^{d}=\left |U\cup\bigcup_{i\leq k}W_i\right|\leq |X\cup Y|.\]

Now assume this is not the case, so that $U_i\not=W_j$ for any $i,j\leq k$. Then, if $i,j\leq k$ we have $|U_i\cap W_j|\leq p^{d-1}$, so that
\begin{align*}
|X\cup Y|&=\left|\bigcup_{i=1}^k U_i\cup \bigcup_{i=1}^k W_i\right|\\
&\geq\sum_{i=1}^k |U_i|+\sum_{i=1}^k |W_i|-\sum_{i,j=1}^k|U_i\cap W_j|\\
&\geq kp^d+kp^d-k^2p^{d-1}.
\end{align*}
The result follows.
\end{proof}

\begin{lemm}\label{LemmInform}
Assume that $a,b,c,p,d,k$ are such that $p$ is a prime power, $a=kp^d$, $a+b+c=p^{d+1}$ and
\[c<\min\{p^d,kp^d-k^2p^{d-1}\}.\]

Then, the geometric protocol with parameters $p,d,k$ is informative for $(a,b,c)$.
\end{lemm}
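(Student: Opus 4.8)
The plan is to prove informativity directly from Definition \ref{def.informativity}: given any deal $(A,B,C)$ and any announcement $\mathcal A = \mathcal A_k[f] \in \gamma(A)$, I must show that $A$ is the unique $X \in \mathcal A$ with $X \subseteq A \cup C$. Since $A \in \mathcal A$ and clearly $A \subseteq A \cup C$, the point $A$ itself always works; the whole content of the lemma is \emph{uniqueness}. So I would fix a second hand $X \in \mathcal A$ with $X \subseteq A \cup C$ and aim to conclude $X = A$. Transporting everything through the bijection $f$, both $f(A)$ and $f(X)$ are $k$-slicings in $\mathbb F^{d+1}_p$, and $f(X) \subseteq f(A) \cup f(C)$.

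The heart of the argument is a cardinality squeeze using Lemma \ref{LemmDecomp}. First I would note the counting facts: a $k$-slicing has exactly $kp^d = a$ elements, and $|f(C)| = c$. Suppose toward a contradiction that $X \neq A$, i.e.\ $f(X)$ and $f(A)$ are \emph{distinct} $k$-slicings. Then Lemma \ref{LemmDecomp} gives $|f(A) \cup f(X)| \geq \min\{(k+1)p^d, 2kp^d - k^2p^{d-1}\}$. On the other hand, since $f(X) \subseteq f(A) \cup f(C)$, we have $f(A) \cup f(X) \subseteq f(A) \cup f(C)$, so $|f(A) \cup f(X)| \leq |f(A)| + |f(C)| = kp^d + c$. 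Combining these two bounds forces
\[
\min\{(k+1)p^d,\, 2kp^d - k^2p^{d-1}\} \leq kp^d + c,
\]
i.e.\ $c \geq \min\{p^d,\, kp^d - k^2p^{d-1}\}$, contradicting the hypothesis $c < \min\{p^d, kp^d - k^2p^{d-1}\}$. Hence no such $X \neq A$ exists, and $A$ is unique.

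One detail I would be careful about is the hypothesis of Lemma \ref{LemmDecomp}, which requires $k \in [1, p-1]$; I should check this is available. Since $a = kp^d$ and $a + b + c = p^{d+1} = p \cdot p^d$, and $b, c \geq 0$ with at least $c < p^d$ (so $b > 0$ forces $a < p^{d+1}$), we get $kp^d < p^{d+1}$, hence $k < p$, so $k \leq p-1$; together with $k \geq 1$ this places $k$ in the required range. The main (and really only) obstacle is simply making sure the two bounds combine in the right direction after peeling off the $\min$: the Lemma's lower bound subtracts $kp^d$ to yield exactly $\min\{p^d, kp^d - k^2p^{d-1}\}$, which is precisely the quantity the hypothesis bounds $c$ strictly below, so the inequalities are perfectly matched and the contradiction is clean. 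Everything else is bookkeeping through the bijection $f$.
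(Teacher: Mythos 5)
Your proof is correct and takes essentially the same route as the paper's: both reduce uniqueness to the lower bound of Lemma \ref{LemmDecomp} for two distinct $k$-slicings lying inside the $(a+c)$-point set $f(A\cup C)$, and derive the same contradiction with $c<\min\{p^d,kp^d-k^2p^{d-1}\}$. The only differences are cosmetic: the paper phrases the squeeze as a general claim about arbitrary sets of at most $a+c$ points, while you additionally verify the hypothesis $k\in[1,p-1]$ of Lemma \ref{LemmDecomp}, which the paper leaves implicit.
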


\begin{proof} We must show that, given a deal $(A,B,C)$ and a bijection $f:D\to\mathbb F^{d+1}_p$, there can be only one $X\in\mathcal A_k[f]$ with $X\subseteq A\cup C$. More generally, we claim that there may only be one $k$-slicing contained in any set $E\subseteq \mathbb F^{d+1}_p$ with at most $a+c$ points; for indeed, by Lemma \ref{LemmDecomp}, if $E$ contains two distinct $k$-slicings then
\[|E|\geq kp^d+\min\{p^d,kp^d-k^2p^{d-1}\}>a+c.\]
But this contradicts the assumption that $|E|=a+c$, and we conclude that $E$ contains only one $k$-slicing, as claimed.
\end{proof}

Now let us turn to $k$-safety. Once again we begin with purely combinatorial preliminaries. The following is an elementary but useful fact.

\begin{lemm}\label{LemmaDE}
Suppose that $c+k\leq p$ and $|C|\leq c$. Let $V$ be any hyperplane of $\mathbb F^{d+1}_p$. Then, there exist $k$ distinct hyperplanes parallel or equal to $V$ and not meeting $C$.
\end{lemm}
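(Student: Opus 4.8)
Looking at this lemma, I need to prove that given $c+k \leq p$ and a hyperplane $V$ with $|C| \leq c$, there exist $k$ distinct hyperplanes parallel or equal to $V$ that avoid $C$.

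Let me think about the structure. The hyperplanes parallel or equal to $V$ form a parallel class. How many are there? If $V$ has dimension $d$ in $\mathbb{F}_p^{d+1}$, then the parallel class partitions the whole space into $p$ parallel hyperplanes (since $p^{d+1}/p^d = p$). These are disjoint and cover everything.

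So there are exactly $p$ hyperplanes in the parallel class. Each point of $C$ lies in exactly one of these hyperplanes. So $C$ can "block" at most $|C| \leq c$ of them. The remaining $p - c \geq p - c$ hyperplanes avoid $C$. Since $c + k \leq p$, we have $p - c \geq k$, giving us at least $k$ hyperplanes avoiding $C$.

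Let me verify: the parallel class consists of cosets $x + V$ as $x$ ranges over representatives. Since $V$ is a subspace of dimension $d$ (codimension 1), the quotient $\mathbb{F}_p^{d+1}/V$ has dimension 1, so $p$ cosets. These are mutually disjoint (parallel hyperplanes have empty intersection, as stated in the excerpt). Each element of $C$ belongs to exactly one coset.

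This is very clean. The key fact is the partition structure.

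<br>

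The plan is to exploit the fact that the hyperplanes parallel or equal to a fixed hyperplane $V$ partition the entire space $\mathbb F^{d+1}_p$ into exactly $p$ disjoint pieces, and then observe that the $c$ points of $C$ can intersect at most $c$ of these pieces. First I would fix a $d$-dimensional subspace $V_0$ with $V=x_0+V_0$ for some $x_0$; then every hyperplane parallel or equal to $V$ is a coset of $V_0$, i.e.\ of the form $x+V_0$. Since $V_0$ has codimension one, the quotient $\mathbb F^{d+1}_p/V_0$ is one-dimensional and has exactly $p$ elements, so there are precisely $p$ such cosets. These cosets are mutually disjoint (parallel hyperplanes have empty intersection, as recalled in the finite-geometry preliminaries) and their union is all of $\mathbb F^{d+1}_p$.

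Next I would use this partition to count. Because the $p$ cosets are pairwise disjoint and cover the whole space, every point of $C$ lies in exactly one of them. Hence the set of cosets that \emph{do} meet $C$ has size at most $|C|\leq c$: indeed, the map sending each point of $C$ to the unique coset containing it has image of cardinality at most $|C|$. Consequently the number of cosets of $V_0$ that avoid $C$ entirely is at least $p-c$.

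Finally I would invoke the hypothesis. Since $c+k\leq p$ we have $p-c\geq k$, so among the $p$ hyperplanes parallel or equal to $V$ there are at least $k$ that do not meet $C$. Choosing any $k$ of these yields $k$ distinct hyperplanes parallel or equal to $V$ and disjoint from $C$, as required.

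I do not expect any serious obstacle here: the statement is essentially a pigeonhole argument, and the only point requiring care is the bookkeeping that ``parallel or equal to $V$'' corresponds exactly to ``coset of the underlying subspace $V_0$,'' together with the standard fact that a codimension-one subspace has exactly $p$ cosets partitioning the space. The role of the hypothesis $c+k\leq p$ is transparent: it guarantees that blocking at most $c$ cosets still leaves at least $k$ free, and this is precisely the bound that will later feed into condition \eqref{Cond2} of the main theorem.
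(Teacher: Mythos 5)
Your proof is correct and follows essentially the same argument as the paper's: the hyperplanes parallel or equal to $V$ partition $\mathbb F^{d+1}_p$ into exactly $p$ cosets, each point of $C$ blocks at most one, and $c+k\leq p$ leaves at least $k$ cosets untouched. The paper's version is just a terser statement of the same pigeonhole count.
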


\begin{proof}
Each hyperplane has $p^d$ elements, and each $x\in\mathbb F^{d+1}_p$ lies in a unique hyperplane that is parallel to $V$. It follows that there are $p$ hyperplanes parallel or equal to $V$; since $c+k\leq p$, at least $k$ of them do not meet $C$.
\end{proof}

\begin{lemm}\label{LemmSafePre}
Let $E$ be any subset of $\mathbb F^{d+1}_p$ such that $k|E|\leq p$ and let $X$ be a set of at most $k$ points such that $X\cap E=\varnothing$.

Then, there is a hyperplane $V$ such that $(x_i+ V)\cap E=\varnothing$ for all $i\leq k$.
\end{lemm}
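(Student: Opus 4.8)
The plan is to convert the geometric conclusion into the linear-algebraic statement that a single $d$-dimensional subspace can be made to avoid a small set of nonzero vectors, and then to exhibit such a subspace by counting. Concretely, I would write $X=\{x_1,\dots,x_m\}$ with $m\le k$ and reformulate the goal: for a $d$-dimensional subspace $V$, a point $y$ lies in $(x_i+V)\cap E$ exactly when $y\in E$ and $y-x_i\in V$; hence $(x_i+V)\cap E=\varnothing$ for all $i\le m$ if and only if $V$ contains no vector of the form $e-x_i$ with $e\in E$ and $i\le m$. Setting $F=\bigcup_{i=1}^m(E-x_i)$, the task becomes to find a $d$-dimensional subspace $V$ with $V\cap F=\varnothing$. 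Two facts make this possible: since $X\cap E=\varnothing$, no difference $e-x_i$ equals $0$, so $0\notin F$ (which matters, as every subspace contains $0$); and $|F|\le m|E|\le k|E|\le p$.

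Next I would count subspaces. The total number of $d$-dimensional subspaces of $\mathbb F^{d+1}_p$ is $\frac{p^{d+1}-1}{p-1}$, while for any fixed nonzero $v$ the number of them containing $v$ is $\frac{p^{d}-1}{p-1}$ (passing to $\mathbb F^{d+1}_p/\langle v\rangle\cong\mathbb F^{d}_p$, these correspond to the $(d-1)$-dimensional subspaces of a $d$-dimensional space). A union bound over $F$ then bounds the number of $d$-dimensional subspaces meeting $F$ by $|F|\cdot\frac{p^{d}-1}{p-1}$. Comparing with the total count and using $|F|\le p$,
\[
|F|\cdot\frac{p^{d}-1}{p-1}\ \le\ p\cdot\frac{p^{d}-1}{p-1}\ =\ \frac{p^{d+1}-p}{p-1}\ <\ \frac{p^{d+1}-1}{p-1},
\]
so not every subspace can meet $F$; any $V$ in the remaining collection avoids $F$ and hence satisfies the lemma.

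The step I expect to be the crux is the subspace count together with keeping the final inequality strict: the slack is exactly one unit, since $|F|\le p$ produces $p^{d+1}-p$ against $p^{d+1}-1$, and the comparison succeeds precisely because $p\ge 2$. This also clarifies why $k|E|\le p$ is the natural hypothesis, as it is exactly the bound that keeps $|F|$ within the range where a surviving subspace is guaranteed.
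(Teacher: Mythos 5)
Your proof is correct, and it takes a genuinely different route from the paper's. The paper proves a more general claim by induction on the dimension $e$ of the subspace sought: under the hypothesis $k|E|<\frac{p^{d-e+2}-1}{p-1}$ it produces an $e$-dimensional subspace $V$ with $(x_i+V)\cap E=\varnothing$ for all $i$, extending a good $e$-dimensional subspace to a good $(e+1)$-dimensional one at each step via a union bound over the $\frac{p^{d-(e+1)+2}-1}{p-1}$ subspaces of dimension $e+1$ containing it; the lemma is the case $e=d$, where that count is $p+1$ and the hypothesis $k|E|\le p$ leaves exactly one unit of slack. You instead collapse the whole argument into a single union bound at the top dimension, after the clean reformulation that $V$ must avoid the set $F=\bigcup_i(E-x_i)$ of at most $k|E|\le p$ nonzero vectors; your counts of all $d$-dimensional subspaces and of those through a fixed nonzero vector are right, and the final strict inequality $p\cdot\frac{p^d-1}{p-1}<\frac{p^{d+1}-1}{p-1}$ has the same one-unit slack as the paper's last inductive step. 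What each approach buys: yours is shorter and makes transparent exactly why $k|E|\le p$ is the right hypothesis (the density of hyperplanes through a fixed nonzero point is just under $1/(p+1)$ of all hyperplanes); the paper's induction yields the stronger intermediate statement about $e$-dimensional subspaces for every $e$, though nothing else in the paper uses it below $e=d$.
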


\begin{proof}
We will prove the more general claim that if $E$ is any subset of $\mathbb F^{d+1}_p$ such that
\[k|E|< \dfrac{p^{d-e+2}-1}{p-1}\]
and $X=\{x_1,\hdots,x_k\}$ is a set of at most $k$ points such that $X\cap E=\varnothing$, then there is an $e$-dimensional subspace $V$ of $\mathbb F^{d+1}_p$ such that $(x_i+ V)\cap E=\varnothing$ for all $i\leq k$. The lemma follows by setting $e=d$ and noting that
\[\dfrac{p^{d-d+2}-1}{p-1}=\dfrac{p^{2}-1}{p-1}=p+1.\]

Suppose that $X\subseteq \{x_1,\hdots,x_k\}$. We proceed to build $V$ by induction on $e$. The base case, when $e=0$, is trivial (just take $V=\{0\}$).

For the inductive step, assume that
\[k|E|<\dfrac{p^{d-(e+1)+2}-1}{p-1}<\dfrac{p^{d-e+2}-1}{p-1}.\]
Then, by induction hypothesis there is an $e$-dimensional subspace $V'$ such that for all $i\leq k$, $(x_i+ V')\cap E=\varnothing$. Let us construct an $e+1$-subspace $V\supseteq V'$ with the same property.

Let $\mathcal T$ be the set of all $(e+1)$-dimensional subspaces $U$ such that $V'\subseteq U$. Each $U$ is of the form $\langle u,V'\rangle$ for some $u\not\in V'$. Hence there are
\[\dfrac{p^{d+1}-p^e}{p^{e+1}-p^e}=\dfrac{p^{d-(e+1)+2}-1}{p-1}\]
values that $U$ may take; this is because $u$ may take $p^d-p^e$ different values, but given $u$ we have that $u+ V'=w+ V'$ if and only if $w\in (u+ V')\setminus V'$, and there are $p^{e+1}-p^e$ such $w$.

Meanwhile, if $U\not=W\in\mathcal T$ then $U\cap W=V'$, from which it follows that, given $i\leq k$, $E$ is the disjoint union of all sets of the form $(x_i+ U)\cap E$ with $U\in\mathcal T$. We conclude that for each $i\leq k$, there are at most $|E|$ values of $U\in\mathcal T$ such that $(x_i+ U)\cap E\not=\varnothing$, and hence there are at most $k|E|$ values of $U\in\mathcal T$ such that there exists $i\leq k$ with $(x_i+ U)\cap E\not =\varnothing$. But
\[k|E|< \dfrac{p^{d-(e+1)+2}-1}{p-1},\]
so there is $U_\ast\in\mathcal T$ such that $(x_i+ U_\ast)\cap E=\varnothing$ for all $i\leq k$ and we choose $V={U_\ast}$.
\end{proof}

With this, we may prove that our protocol is safe.

\begin{lemm}\label{LemmSafe}
Assume that $a,b,c,p,d,k$ are such that $p$ is a prime power, $a=kp^d$, $a+b+c=p^{d+1}$ and further \eqref{Cond2} holds.

Then, the geometric protocol with parameters $p,d,k$ is $k$-safe for $(a,b,c)$.
\end{lemm}

\begin{proof}
Suppose that the deal $(A,B,C)$ is given and Alice has announced $\mathcal A=\mathcal A_k[f]\in\gamma(A)$.

Choose a set $X\subseteq D$ that has at most $k$ elements with $X\cap C=\varnothing$. By Lemma \ref{LemmSafePre}, there is a $d$-dimensional subspace $V$ such that $(x+ V)\cap f(C)=\varnothing$ for all $x\in f(X)$. Let $A_1',\hdots,A_m'$ be all sets of the form $x+ V$ with $x\in f(X)$; we know that $m\leq k$, but note that it may be the case that $m< k$. However, in view of Lemma \ref{LemmaDE}, there are at least $k$ different hyperplanes parallel or equal to $V$ and not meeting $f(C)$ and thus we may pick $A_{m+1}',\hdots,A_k'$ parallel or equal to $V$ but distinct from $A_i'$ for $i\leq m$. Setting $A'=\bigcup _{i=1}^kA_i'$ we see that $A'\in \mathcal A$ is a $k$-slicing containing $X$ and not meeting $f(C)$, so that $f^{-1}(A')\in\mathcal A_k[f]$ and $f^{-1}(A')\cap C=\varnothing$, as required.

Meanwhile, to find an element of $\mathcal A_k[f]$ not containing $X$ and not meeting $C$, it suffices to find a $k$-slicing $A''$ not containing any fixed $x\in f(X)$ and not meeting $f(C)$. Choose any $y\in f(C)$ and any $d$-dimensional subspace $W$ such that $x-y\in W$. Once again use Lemma \ref{LemmaDE} to pick $A''_1,\hdots,A''_k$ parallel or equal to $W$ and not meeting $f(C)$ and set $A''=\bigcup_{i=1}^k A_i''$. Then, $A''$ is a $k$-slicing not meeting $f(C)\cup \{x\}$, which means that $f^{-1}(A'')$ does not meet $C$ and does not contain $X$, as required.
\end{proof}

Our main result is now immediate:

\begin{proof}[Proof of Theorem \ref{TheoMain}]
Suppose $a,b,c,p,d,k$ satisfy \eqref{Cond1} and \eqref{Cond2}. Note that by \eqref{Cond2}, $c+k\leq p$ and thus $c<p^d$; it follows that $c< \min\{p^d,kp^d-k^2p^{d-1}\}$. Then by Lemma \ref{LemmInform}, the geometric protocol is informative, whereas by Lemma \ref{LemmSafe} it is $k$-safe.
\end{proof}

\section{Computing parameters}\label{SecCompPat}

Theorem \ref{TheoMain} gives general conditions under which the geometric protocol works, but it is perhaps not obvious how to find suitable parameters or even that many exist. In this section we shall flesh out more specific consequences of this result, showcasing its usefulness in solving many new instances of the Russian cards. We remark, however, that the bounds we give here are not meant to be exhaustive; the different parameters can be chosen in many other ways to obtain solutions for deals of different sizes.

First let us give a simplified version of \eqref{Cond1}, which will be easier to work with:

\begin{lemm}\label{Lemma12}
Given natural numbers $p,k\geq 1$ we have that $kp^d-k^2p^{d-1}> 0$ if and only if $k\in[1,p-1]$, in which case $kp^d-k^2p^{d-1}\geq p^d-p^{d-1}$.
\end{lemm}

\begin{proof}
The function $kp^d-k^2p^{d-1}$ is concave on $k$ and $kp^d-k^2p^{d-1}=0$ when $k=0$ or $k=p$. It follows that, for natural $k$, $kp^d-k^2p^{d-1}> 0$ if and only if $1\leq k\leq p-1$.

Now, when $k=1$ we have that $kp^d-k^2p^{d-1}=p^d-p^{d-1},$ and similarly when $k=p-1$, from which it follows once again by concavity that $kp^d-k^2p^{d-1}\geq p^d-p^{d-1}$ for all $k\in[1,p-1]$.
\end{proof}

Before we continue, let us mention a simple number-theoretic observation which will nevertheless be very useful:

\begin{lemm}\label{LemmaIsP}
Given $n\geq 1$ there exists a prime power $p$ such that $n<p \leq 2n$.
\end{lemm}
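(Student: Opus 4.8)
The plan is to observe that the statement is weaker than it first appears: since every prime is in particular a prime power, it suffices to exhibit a genuine prime $p$ with $n < p \leq 2n$. This is precisely the content of \emph{Bertrand's postulate}, first proved by Chebyshev, which guarantees that for every integer $n \geq 1$ there is a prime in the half-open interval $(n, 2n]$. (The small case $n=1$ is witnessed by $p=2$, and larger cases by the strict form $n < p < 2n$.) Thus the lemma follows immediately by citation, taking the prime supplied by Bertrand's postulate as the required prime power.

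Should a self-contained argument be preferred, I would reproduce the standard elementary proof of Bertrand's postulate. The key steps are as follows. First, dispatch small values of $n$ up to an explicit bound by exhibiting a finite chain of primes, each less than twice its predecessor, so that every relevant interval $(n,2n]$ is covered. Second, for large $n$ argue by contradiction: assuming no prime lies in $(n,2n]$, derive incompatible bounds on the central binomial coefficient $\binom{2n}{n}$. The lower bound $\binom{2n}{n} \geq 4^n/(2n+1)$ follows since $\binom{2n}{n}$ is the largest of the $2n+1$ terms summing to $4^n$. The upper bound comes from analyzing the prime factorization of $\binom{2n}{n}$: primes $p$ with $2n/3 < p \leq n$ do not divide it, each prime $p \leq \sqrt{2n}$ contributes a power at most $2n$, larger primes contribute at most the first power, and the Chebyshev estimate $\prod_{p \leq x} p < 4^x$ controls the remaining product. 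Comparing the two bounds yields a contradiction once $n$ exceeds the explicit threshold handled in the first step.

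The main obstacle in a self-contained treatment is the careful bookkeeping of the $p$-adic valuations of $\binom{2n}{n}$ together with establishing the Chebyshev-type bound $\prod_{p\leq x} p < 4^x$, which itself requires a separate inductive argument. Since Bertrand's postulate is entirely standard and available in any introductory number theory text, the intended proof is simply to invoke it, remarking that a prime trivially serves as a prime power.
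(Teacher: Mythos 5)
Your proof is correct: Bertrand's postulate does supply a prime in $(n,2n]$ for every $n\geq 1$, and a prime is in particular a prime power, so the lemma follows. However, the paper takes a much lighter route that you hint at but do not fully exploit. You correctly observe that the statement is ``weaker than it first appears'' because prime powers suffice, but the real payoff of that observation is not that one may cite Bertrand --- it is that one may take $p=2^\ell$ for the unique integer $\ell$ with $n<2^\ell\leq 2n$. Such an $\ell$ always exists because the interval $(n,2n]$ has endpoint ratio $2$ and therefore contains exactly one power of $2$. This makes the lemma a one-line triviality requiring no number theory at all, which is presumably why the authors bothered to state it for prime powers rather than primes in the first place. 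Your approach buys a genuine prime (which could matter in other contexts, e.g.\ if one needed $\mathbb F_p$ with $p$ prime), at the cost of importing a theorem whose elementary proof, as you note, requires nontrivial bookkeeping with central binomial coefficients; the paper's approach buys complete self-containment at the cost of only ever producing fields of characteristic $2$ from this lemma. For the purposes of Theorem \ref{TheoKArbitrary}, where any prime power in the interval serves equally well, the paper's choice is the more economical one.
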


\begin{proof}
Choose $\ell$ to be the unique integer such that $n<2^\ell\leq 2n $ and set $p=2^\ell$.
\end{proof}

Now for the main result of this section. In order to give a uniform bound we shall use the fact that for all $k,c\geq 1$ we have that $\max\{k+c,kc\}\leq kc+1$.

\begin{theo}\label{TheoKArbitrary}
Given $k,c\geq 1$. Then, the geometric protocol is informative and $k$-safe for $(a,b,c)$ for infinitely many values of $a,b$ with $b<2a(c+1)$. The smallest such value of $a$ is at most $2k(kc+1)$.
\end{theo}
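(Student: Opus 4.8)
The plan is to reduce everything to Theorem \ref{TheoMain} by exhibiting, for each $d\geq 1$, a prime power $p$ and the associated size $(kp^{d},\,p^{d+1}-kp^{d}-c,\,c)$ satisfying \eqref{Cond1} and \eqref{Cond2}. The whole argument rests on one good choice of $p$ that works uniformly for all $d$ simultaneously. First I would fix $p$ once and for all by applying Lemma \ref{LemmaIsP} with $n=kc+1$, obtaining a prime power $p$ with $kc+1<p\leq 2(kc+1)$ (so in fact $p\geq kc+2$). The lower bound is tailored to force both conditions, while the upper bound controls the size of $a$.

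Next I would verify the two conditions for every $d\geq 1$, noting that $a=kp^{d}$ and $a+b+c=p^{d+1}$ hold by construction. For \eqref{Cond2}, the observation $\max\{k+c,kc\}\leq kc+1<p$ settles it at once. For \eqref{Cond1}, I would observe that $p\geq kc+2>k$, so $k\in[1,p-1]$, and then invoke Lemma \ref{Lemma12} to get
\[
kp^{d}-k^{2}p^{d-1}\geq p^{d}-p^{d-1}=p^{d-1}(p-1)\geq p-1\geq kc+1>c,
\]
which is exactly \eqref{Cond1} and holds for all $d\geq 1$ in one stroke. By Theorem \ref{TheoMain} the protocol is then informative and $k$-safe for each of these sizes; since $a=kp^{d}$ is strictly increasing in $d$ (as $p\geq 2$), these furnish infinitely many distinct pairs $(a,b)$.

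It remains to establish the two quantitative claims. For the bound on $b$, I would rearrange $b<2a(c+1)$ into the equivalent inequality $p^{d+1}<kp^{d}(2c+3)+c$ and close it by the chain $p^{d+1}=p\cdot p^{d}\leq(2kc+2)p^{d}<(2kc+3k)p^{d}+c=kp^{d}(2c+3)+c$, where the strict step uses $(3k-2)p^{d}+c>0$ (valid since $3k-2\geq 1$). For the bound on the smallest $a$, the instance $d=1$ already gives a working size with $a=kp\leq k\cdot 2(kc+1)=2k(kc+1)$, as required.

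The main obstacle I anticipate is choosing $p$ so that the single bound $2k(kc+1)$ holds uniformly over all $k,c\geq 1$, in particular the edge case $k=1$: there $\max\{k+c,kc\}=c+1$ and \eqref{Cond1} is at its tightest, so the more natural choice $kc<p\leq 2kc$ can fail (for instance when $c+1$ is itself a prime power, forcing $p=c+1$ and breaking \eqref{Cond1} at $d=1$). Selecting $p$ just above $kc+1$ rather than above $kc$ is precisely what repairs this, while the factor $2$ in Lemma \ref{LemmaIsP} keeps $a=kp$ inside the stated bound. Checking that this one choice simultaneously meets \eqref{Cond1}, \eqref{Cond2}, the bound on $b$, and the bound on $a$ is the crux of the computation.
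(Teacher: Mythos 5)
Your proposal is correct and follows essentially the same route as the paper: the same choice of $p$ via Lemma \ref{LemmaIsP} applied to $n=kc+1$, the same verification of \eqref{Cond1} via Lemma \ref{Lemma12} and of \eqref{Cond2} via $\max\{k+c,kc\}\leq kc+1<p$, and the same use of $d=1$ for the bound on the smallest $a$. The only cosmetic difference is that you verify $b<2a(c+1)$ by rearranging the inequality directly, whereas the paper notes $b<pa/k\leq 2a(c+1)$; these are equivalent computations.
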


\begin{proof} As before, it suffices to check that all conditions of Theorem \ref{TheoMain} may be satisfied for appropriately chosen parameters.

Fix $k\geq 1$ and $c>0$. Use Lemma \ref{LemmaIsP} to find a prime power $p$ such that $kc+1<p\leq 2(kc+1)$. Given $d\geq 1$, set $a=kp^d$ and $b=p^{d+1}-a-c$.

First we observe that if $d=1$, $a=kp\leq 2k(kc+1)$. Note that $b<\frac{pa}k\leq 2a(c+1)$ independently of $d$.

To see that condition \eqref{Cond1} holds, it suffices in view of Lemma \ref{Lemma12} to observe that
\[c<p-1 \leq p^{d-1}(p-1)=p^d-p^{d-1}.\]
Meanwhile, condition \eqref{Cond2} holds by the way we chose $p$.

It follows from Theorem \ref{TheoMain} that the geometric protocol is informative and $k$-safe for $(a,b,c)$, as claimed.
\end{proof}

\section{Conclusions and future work}

The generalized Russian cards problem, aside from being interesting from a purely combinatorial perspective, provides a prototypical case-study for {\em in\-for\-ma\-tion-based cryptography} \cite{maurer:1999}. This alternative paradigm is based on the {\em impossibility,} rather than the {\em improbability,} of encrypted messages being intercepted. Information-based cryptography has the advantage over probability-based cryptography that it does not depend on eavesdroppers' computational resources or, for example, the assumption that $\text{\sc P}\neq\text{\sc NP}$. However, such methods are less developed, and for good reason, as the demand that protocols be {\em unconditionally secure} is rather strong. Also, they depend on a prior phase of key-distribution (`dealing the cards') by some central authority, that is assumed to be perfectly secure. Nevertheless, solutions to the generalized Russian cards problem could very well lead, directly or indirectly, to applications in information-theoretic methods of secure communication.

In this paper we have given improved bounds for the tuple $(a,b,c)$ to have an informative and $k$-safe solution. The notion of $k$-safety was originally considered by Stinson and Swanson in \cite{swanson:2012}, where $k$-safe protocols are given for many cases where Cath has one card. Perhaps the main contribution of the present work is that the protocol we present gives the first $k$-safe solutions for $c>1$.

Stinson and Swanson also introduce {\em perfect} security. This strengthened notion of security may yet be extended to many new cases, for in the methods we propose (and most that are available in the literature), there is no guarantee that Cath does not learn additional probabilistic information. It may be possible to amend the geometric protocol to achieve either perfect security or, more likely, an approximate variant, where the probabilistic information that Cath learns is minimal.

Finally, the authors have presented in \cite{colouring} a method which allows Cath to have more cards than Alice, which we call the {\em colouring protocol.} It is known that such a protocol must have more than two steps; ours has four. In principle, the present geometric protocol may be combined with the colouring protocol to give $k$-safe solutions when Cath has many more cards than Alice, although the combinatorial analysis is likely to be rather challenging.

These are possible avenues to explore in future work; however, the Russian cards problem allows for a large degree of freedom and it may very well be that entirely new and better methods shall be developed to obtain more efficient and secure protocols.

\end{document}